\newcommand\dif{\mathrm{d}}
\newtheorem{theorem}{\textbf{Theorem}}
\title{\LARGE \bf Optimal Investment under Mutual Decision Influence among Agents}
\author{Huisheng Wang and H. Vicky Zhao
\thanks{Huisheng Wang and H. Vicky Zhao are with the Department of Automation, Tsinghua University, Beijing, 100084, China. Emails: {\tt\small \url{whs22@mails.tsinghua.edu.cn}, \url{vzhao@tsinghua.edu.cn}}}
}
\begin{document}

\maketitle
\thispagestyle{empty}
\pagestyle{empty}

\begin{abstract}
In financial markets, agents often mutually influence each other's investment strategies and adjust their strategies to align with others. However, there is limited quantitative study of agents' investment strategies in such scenarios. In this work, we formulate the optimal investment differential game problem to study the mutual influence among agents. We derive the analytical solutions for agents' optimal strategies and propose a fast algorithm to find approximate solutions with low computational complexity. We theoretically analyze the impact of mutual influence on agents' optimal strategies and terminal wealth. When the mutual influence is strong and approaches infinity, we show that agents' optimal strategies converge to the asymptotic strategy. Furthermore, in general cases, we prove that agents' optimal strategies are linear combinations of the asymptotic strategy and their rational strategies without others' influence. We validate the performance of the fast algorithm and verify the correctness of our analysis using numerical experiments. This work is crucial to comprehend mutual influence among agents and design effective mechanisms to guide their strategies in financial markets.
\end{abstract}


\section{Introduction}
\label{sec:intro}
Agents, including investors and companies, often design their portfolios based on risk preferences to achieve higher returns and lower volatility and maximize their utility \cite{merton1969lifetime, rogers2013optimal, leung2021optimal, moehle2021certainty}. With the rise of online social networks, investment forum platforms have become useful tools, enabling agents to follow each other and share their investment strategies \cite{hao2019online}. Therefore, agents are often \textit{mutually} influenced by each other when making investment strategies, 
and adjust their strategies to align with others \cite{delellis2017evolution}. While moderate alignment with others' strategies can be beneficial and provide valuable insights, excessive mutual influence often leads to agents' suboptimal investment strategies \cite{ahmad2022does}. Therefore, it is crucial to analyze the impact of mutual influence on agents' investment strategies in social networks and design mechanisms to mitigate the adverse effects of excessive mutual influence.

Many studies have examined others' influence on agents' investment strategy-making processes. 
The works in \cite{chang2000examination, yu2019detection} qualitatively confirmed the significant impact of others' influence on agents' investment strategies in equity and future markets using empirical analysis. Our prior work in \cite{wang2024herd} formulated an optimal investment problem involving one leading expert and one normal agent, where the leading expert unilaterally influences the normal agent but not vice versa. 
We derived the normal agent's optimal investment strategies using optimal control theory, and quantitatively analyzed the influence of the leading expert on the normal agent's optimal investment strategies in \cite{wang2024herd}. However, the above studies do not consider the scenario where multiple agents \textit{mutually} influence each other's investment strategies.

When agents mutually influence each other, it forms a \textit{multi-agent differential game problem}. In this problem, we assume that agents make rational strategies, and they are aware that others also make rational strategies. Furthermore, they are aware that others are aware of their rationality when making strategies, and so on, creating a recursive awareness of rationality \cite{friedman2013differential}. 
An approach to solve the multi-agent differential game problem is the distributed method, which determines the optimal strategy for each agent and then jointly derives their optimal strategies \cite{suzuki2014game}. However, this approach has high computational complexity, especially when the number of agents is large. Alternatively, the multi-agent differential game problem can be tackled using reinforcement learning \cite{zhang2021multi, zhang2024optimizing}. 
Nevertheless, this approach lacks analytical solutions for agents' optimal strategies, which poses challenges for theoretical analysis of the impact of mutual influence on agents' optimal strategies.

In this work, we study the optimal investment under mutual strategy influence among agents. Our contributions can be summarized as follows:
\begin{itemize}
    \item We formulate an optimal investment differential game problem to model the mutual influence among agents.
    \item We derive the analytical solutions for agents' optimal strategies and propose a fast algorithm to find approximated solutions with low computational complexity.
    \item We theoretically analyze the impact of mutual influence on agents' optimal strategies and terminal wealth. 
    \item We conduct numerical experiments to validate the accuracy and efficiency of our proposed fast algorithm and verify the correctness of our analysis.
\end{itemize}

The rest of this paper is organized as follows. In Section \ref{sec:formulation}, we formulate the optimal investment differential game problem. In Section \ref{sec:solution}, we derive the analytical solutions of agents' optimal strategies and propose the fast algorithm. In Section \ref{sec:analysis}, we derive the asymptotic strategy and theoretically analyze the impact of mutual influence on agents' optimal strategies and terminal wealth. The experimental results are in Section \ref{sec:experiment}. Section \ref{sec:conclusion} is the conclusion.

\section{Problem Formulation}
\label{sec:formulation}
We consider a financial market with $m$ risky assets and $n$ agents. We denote the set of risky assets as $\mathcal{M}:=\{1,2,\ldots,m\}$, the set of agents as $\mathcal{N}:=\{1,2,\ldots,n\}$, and the investment period as $\mathcal{T}:=[0, T]$. To model the randomness of the risky asset prices, let $\{\boldsymbol{B}(t)\}_{t\in\mathcal{T}}$ be an $m$-dimensional standard Brownian motion, and we assume that the $m$ components of $\{\boldsymbol{B}(t)\}_{t\in\mathcal{T}}$ are independent of each other \cite{merton1969lifetime}.
Following our prior work in \cite{wang2024herd}, we define the wealth invested in the risky assets by the $j$-th agent at time $t$ as his/her \textit{strategy}, denoted by $\boldsymbol{P}_j(t)\in\mathbf{R}^m$. Given the interest rate $r$, the excess return rates $\boldsymbol{\nu}\in\mathbf{R}^m$, and the volatility $\boldsymbol{\sigma}\in\mathbf{R}^{m\times m}$ of the risky assets, the $j$-th agent's wealth process $\{X_j(t)\}_{t\in\mathcal{T}}$ satisfies
\begin{equation}
    \dif X_j(t)=rX_j(t)\dif t+\boldsymbol{P}_j^\top(t)[\boldsymbol{\nu}\dif t+\boldsymbol{\sigma}\dif\boldsymbol{B}(t)],
    \label{eq:wealth}
\end{equation}
where $X_j(0)=x_j$ is his/her initial wealth. To simplify the notation, we denote $\boldsymbol{\varSigma}:=\boldsymbol{\sigma}\boldsymbol{\sigma}^\top$ as the covariance matrix of the risky assets' prices and $\kappa:=\boldsymbol{\nu}^\top\boldsymbol{\varSigma}^{-1}\boldsymbol{\nu}$. 


Considering the mutual influence among agents, the $j$-th agent jointly maximizes his/her expected utility of the terminal wealth $\mathbb{E}\phi_j[X_j(T)]$ and minimizes the distance between his/her own and others' strategies. The $j$-th agent's utility of the terminal wealth is defined as
\begin{equation}
    \phi_j[X_j(T)]:=-\alpha_j^{-1}\exp[-\alpha_jX_j(T)],
\end{equation}
where $\alpha_j$ is his/her risk aversion coefficient \cite{pratt1978risk}. 

To model the mutual influence, we denote the adjacency matrix among the $n$ agents as $\boldsymbol{W}=[w_{ji}]_{j,i\in\mathcal{N}}\in\mathbf{R}^{n\times n}$, whose element $w_{ji}$ quantifies the influence of the $i$-th agent's strategy on the $j$-th agent's strategy. We assume that $\boldsymbol{W}$ is row-stochastic and satisfies $w_{ji}\geqslant0$ and $\sum_{i\in\mathcal{N}}w_{ji}=1$ for all $j,i\in\mathcal{N}$. Given 
$\boldsymbol{W}$, the $j$-th agent's strategy $\boldsymbol{P}_j(t)$ at time $t$ tends to align with his/her average followees' strategy $\boldsymbol{Q}_j(t):=\sum_{i\in\mathcal{N}}w_{ji}\boldsymbol{P}_i(t)$. Following our prior work in \cite{wang2024herd}, we use the \textit{average deviation} to measure the distance between the two strategies, which is
\begin{equation}
    D(\boldsymbol{P}_j,\boldsymbol{Q}_j):=\frac{1}{2}\int_0^T\mathrm{e}^{2r(T-t)}(\boldsymbol{P}_j-\boldsymbol{Q}_j)^\top(\boldsymbol{P}_j-\boldsymbol{Q}_j)(t)\dif t,
\end{equation}
where the exponential decay term $\mathrm{e}^{2r(T-t)}$ implies that deviations occurring later carry less weight \cite{wang2024herd}.

In summary, the proposed optimal investment differential game problem is: for all $j\in\mathcal{N}$,
\begin{equation}\left\{
    \begin{aligned}
        &\sup_{\{\boldsymbol{P}_j(t)\}_{t\in\mathcal{T}}}\mathbb{E}\phi_j[X_j(T)]-\theta_jD(\boldsymbol{P}_j,\boldsymbol{Q}_j),\\
        &\dif X_j(t)=rX_j(t)\dif t+\boldsymbol{P}_j^\top(t)[\boldsymbol{\nu}\dif t+\boldsymbol{\sigma}\dif\boldsymbol{B}(t)],
    \end{aligned}\right.
    \label{eq:game_problem}
\end{equation}
where the influence coefficient $\{\theta_j\}_{j\in\mathcal{N}}$ is to address the tradeoff between the two different objectives, i.e., maximizing 
$\mathbb{E}\phi_j[X_j(T)]$ and minimizing 
$D(\boldsymbol{P}_j,\boldsymbol{Q}_j)$ \cite{wang2024herd}. 
When $\theta_j=0$, the $j$-th agent's optimal strategy $\{\boldsymbol{P}_j^*(t)\}_{t\in\mathcal{T}}$ is independent of others' strategies, which we call the \textit{rational strategy}. From \cite{merton1969lifetime}, the $j$-th agent's rational strategy $\{\bar{\boldsymbol{P}}_j(t)\}_{t\in\mathcal{T}}$ is
\begin{equation}
    \bar{\boldsymbol{P}}_j(t)=\alpha_j^{-1}\mathrm{e}^{r(t-T)}\boldsymbol{\varSigma}^{-1}\boldsymbol{\nu}.
    \label{eq:rational}
\end{equation}

\section{Optimal Investment Strategies}
\label{sec:solution}
In this section, we solve the problem \eqref{eq:game_problem} to obtain agents' optimal investment strategies. We show that agents' optimal strategies are linear combinations of their rational strategies, and propose a fast algorithm to find the approximated optimal strategies with low computational complexity. 

\subsection{The Analytical Solution of Agents' Optimal Strategies}
We first show that the $j$-th agent's optimal strategy $\{\boldsymbol{P}_j^*(t)\}_{t\in\mathcal{T}}$ is a linear combination of his/her rational strategy $\{\bar{\boldsymbol{P}}_j(t)\}_{t\in\mathcal{T}}$ and his/her average followees' optimal strategy $\{\boldsymbol{Q}_j^*(t)\}_{t\in\mathcal{T}}$. In the following, we denote $\boldsymbol{I}_m\in\mathbb{R}^{m\times m}$ and $\boldsymbol{O}_m\in\mathbb{R}^{m\times m}$ as the identity and all-zero matrices, respectively.

\begin{theorem}\label{the:1}
The $j$-th agent's optimal strategy $\{\boldsymbol{P}_j^*(t)\}_{t\in\mathcal{T}}$ in the differential game problem \eqref{eq:game_problem} is
\begin{equation}
    \boldsymbol{P}_j^*(t)=\boldsymbol{Z}_j\bar{\boldsymbol{P}}_j(t)+(\boldsymbol{I}_m-\boldsymbol{Z}_j)\boldsymbol{Q}_j^*(t),
    \label{eq:decomposition1}
\end{equation}
where $\boldsymbol{Q}^*_j(t)$ is the average followees' optimal strategy with
\begin{equation}
    \boldsymbol{Q}^*_j(t):=\sum_{i\in\mathcal{N}}w_{ji}\boldsymbol{P}_i^*(t),
    \label{eq:q}
\end{equation}
$\boldsymbol{Z}_j$ is called the $j$-th agent's investment opinion with
\begin{equation}
    \boldsymbol{Z}_j:=(\boldsymbol{I}_m+\alpha_j^{-1}\eta_j^{-1}\theta_j\boldsymbol{\varSigma}^{-1})^{-1},
    \label{eq:opinion}
\end{equation}
and $\eta_j$ is the integral constant with
\begin{align}
     \eta_j&:=\exp\left[-\alpha_jx_j\mathrm{e}^{rT}-\alpha_j\int_0^T\mathrm{e}^{r(T-t)}\boldsymbol{\nu}^\top\boldsymbol{P}_j^*(t)\dif t\right.\notag\\
     &\left.+\frac{\alpha_j^2}{2}\int_0^T\mathrm{e}^{2r(T-t)}\boldsymbol{P}_j^{*\top}(t)\boldsymbol{\varSigma}\boldsymbol{P}_j^*(t)\dif t\right].
     \label{eq:integral}
\end{align}
\end{theorem}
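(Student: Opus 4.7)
The plan is to exploit the fact that $\boldsymbol{P}_j(\cdot)$ is an open-loop control (a deterministic function of time), so that the terminal wealth becomes a tractable stochastic integral. Integrating the linear SDE \eqref{eq:wealth} gives
\[
X_j(T)=x_je^{rT}+\int_0^Te^{r(T-t)}\boldsymbol{P}_j^\top(t)\boldsymbol{\nu}\,\dif t+\int_0^Te^{r(T-t)}\boldsymbol{P}_j^\top(t)\boldsymbol{\sigma}\,\dif\boldsymbol{B}(t),
\]
which is Gaussian with mean and variance that are explicit functionals of $\boldsymbol{P}_j$. Applying the standard Gaussian moment generating function to $\mathbb{E}[-\alpha_j^{-1}e^{-\alpha_jX_j(T)}]$ yields exactly $-\alpha_j^{-1}\eta_j$ with $\eta_j$ given by \eqref{eq:integral}. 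At the Nash equilibrium, each agent treats the aggregate $\boldsymbol{Q}_j^*(t)$ from \eqref{eq:q} as a fixed function of $t$, so the stochastic problem \eqref{eq:game_problem} collapses to the deterministic calculus-of-variations problem $\sup_{\boldsymbol{P}_j(\cdot)}\{-\alpha_j^{-1}\eta_j[\boldsymbol{P}_j]-\theta_jD(\boldsymbol{P}_j,\boldsymbol{Q}_j^*)\}$.

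Since the objective depends on $\boldsymbol{P}_j$ only through pointwise integrands (no derivatives of $\boldsymbol{P}_j$), the Euler--Lagrange condition reduces to a pointwise first-order condition at each $t$. Differentiating $-\alpha_j^{-1}\eta_j$ through the chain rule contributes $\eta_je^{r(T-t)}\boldsymbol{\nu}-\alpha_j\eta_je^{2r(T-t)}\boldsymbol{\varSigma}\boldsymbol{P}_j(t)$, while the penalty contributes $-\theta_je^{2r(T-t)}(\boldsymbol{P}_j(t)-\boldsymbol{Q}_j^*(t))$. Setting the sum to zero and cancelling $e^{2r(T-t)}$ yields the linear equation
\[
(\alpha_j\eta_j\boldsymbol{\varSigma}+\theta_j\boldsymbol{I}_m)\boldsymbol{P}_j^*(t)=\eta_je^{-r(T-t)}\boldsymbol{\nu}+\theta_j\boldsymbol{Q}_j^*(t).
\]

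To recover \eqref{eq:decomposition1}, I would use the identity $(\alpha_j\eta_j\boldsymbol{\varSigma}+\theta_j\boldsymbol{I}_m)^{-1}=\alpha_j^{-1}\eta_j^{-1}\boldsymbol{Z}_j\boldsymbol{\varSigma}^{-1}$ (which follows from \eqref{eq:opinion} together with the fact that $\boldsymbol{Z}_j$ and $\boldsymbol{\varSigma}^{-1}$ commute). Multiplying through, the first term becomes $\alpha_j^{-1}\boldsymbol{Z}_j\boldsymbol{\varSigma}^{-1}e^{r(t-T)}\boldsymbol{\nu}=\boldsymbol{Z}_j\bar{\boldsymbol{P}}_j(t)$ by \eqref{eq:rational}, and the second term becomes $\boldsymbol{Z}_j(\alpha_j^{-1}\eta_j^{-1}\theta_j\boldsymbol{\varSigma}^{-1})\boldsymbol{Q}_j^*(t)=\boldsymbol{Z}_j(\boldsymbol{Z}_j^{-1}-\boldsymbol{I}_m)\boldsymbol{Q}_j^*(t)=(\boldsymbol{I}_m-\boldsymbol{Z}_j)\boldsymbol{Q}_j^*(t)$ by rearranging the definition \eqref{eq:opinion}. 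Adding gives exactly the claimed decomposition.

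The main obstacle, and where care is required, is that $\eta_j$ itself depends on $\boldsymbol{P}_j^*$ through the integrals in \eqref{eq:integral}, so \eqref{eq:decomposition1} is really an implicit fixed-point relation: $\boldsymbol{Z}_j$ depends on $\eta_j$, which depends on $\boldsymbol{P}_j^*$, which depends on $\boldsymbol{Z}_j$. To justify that the FOC actually identifies the unique optimum, I would check concavity: $\log\eta_j[\boldsymbol{P}_j]$ is a linear plus positive-definite quadratic in $\boldsymbol{P}_j$, so $\eta_j$ is log-convex and hence $-\alpha_j^{-1}\eta_j$ is concave for $\alpha_j>0$; combined with strict convexity of $D$ in $\boldsymbol{P}_j$, the objective is concave and the FOC is sufficient. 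The additional coupling of these fixed-point conditions across all $j\in\mathcal{N}$ through \eqref{eq:q} is precisely what motivates the fast algorithm developed later in the section.
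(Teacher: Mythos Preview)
Your argument is correct and complete. The paper omits its own proof, stating only that it parallels Theorem~1 of the prior work~\cite{wang2024herd} (described in the introduction as ``optimal control theory''), so a line-by-line comparison is not possible; your direct route---integrate the linear SDE, use the Gaussian moment generating function to collapse $\mathbb{E}\phi_j[X_j(T)]$ to $-\alpha_j^{-1}\eta_j[\boldsymbol{P}_j]$, and impose a pointwise Euler--Lagrange condition---is entirely consistent with the paper's framework (the mean and variance formulas \eqref{eq:ex}--\eqref{eq:dx} invoked later are exactly the Gaussian parameters you compute) and makes transparent why $\eta_j$ appears as a self-referential constant, which is precisely the fixed-point structure that Algorithm~\ref{alg:1} iterates on.
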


\begin{proof}
The proof is similar to the proof of Theorem 1 in our prior work in \cite{wang2024herd} and omitted here. 
\end{proof}

Following our prior work in \cite{wang2024herd}, we call the weight matrix $\boldsymbol{Z}_j$ the $j$-th agent's investment opinion, which quantifies the $j$-th agent's adherence to his/her rational strategy in his/her optimal strategy $\{\boldsymbol{P}_j^*(t)\}_{t\in\mathcal{T}}$. 

Next, we further show that the $j$-th agent's optimal strategy $\{\boldsymbol{P}_j^*(t)\}_{t\in\mathcal{T}}$ is a linear combination of all $n$ agents' rational strategies. To simplify the notation, we denote $\boldsymbol{U}:=[\boldsymbol{U}_{ji}]_{j,i\in\mathcal{N}}\in\mathbf{R}^{mn\times mn}$, $\boldsymbol{Z}:=\mathrm{diag}(\boldsymbol{Z}_j)_{j\in\mathcal{N}}\in\mathbf{R}^{mn\times mn}$,
$\boldsymbol{S}:=\boldsymbol{I}_{mn}-\boldsymbol{Z}\in\mathbf{R}^{mn\times mn}$, $\boldsymbol{P}^*(t):=[\boldsymbol{P}_j^*(t)]_{j\in\mathcal{N}}^\top\in\mathbf{R}^{mn}$, and $\bar{\boldsymbol{P}}(t):=[\bar{\boldsymbol{P}}_j(t)]_{j\in\mathcal{N}}^\top\in\mathbf{R}^{mn}$ in the concatenated form.

\begin{theorem}\label{the:2}
The $j$-th agent's optimal strategy $\{\boldsymbol{P}_j^*(t)\}_{t\in\mathcal{T}}$ in the differential game problem \eqref{eq:game_problem} is 
\begin{equation}
    \boldsymbol{P}_j^*(t)=\sum_{i\in\mathcal{N}}\boldsymbol{U}_{ji}\bar{\boldsymbol{P}}_i(t),
    \label{eq:decomposition}
\end{equation}
where $\boldsymbol{U}_{ji},i\in\mathcal{N}$ are elements in the $j$-th row of the matrix
\begin{equation}
    \boldsymbol{U}=[\boldsymbol{I}_{mn}-\boldsymbol{S}(\boldsymbol{W}\otimes\boldsymbol{I}_m)]^{-1}\boldsymbol{Z},
    \label{eq:u}
\end{equation}
and satisfy
\begin{equation}
    \sum_{i\in\mathcal{N}}\boldsymbol{U}_{ji}=\boldsymbol{I}_m.
    \label{eq:sumu}
\end{equation}
Here, $\otimes$ represents the Kronecker product of two matrices.
\end{theorem}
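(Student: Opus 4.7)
The plan is to promote the per-agent identity of Theorem \ref{the:1} to a single matrix equation on the stacked vector $\boldsymbol{P}^*(t)\in\mathbf{R}^{mn}$, then invert the resulting linear operator to read off $\boldsymbol{U}$. Specifically, I would first rewrite \eqref{eq:decomposition1} as $\boldsymbol{P}_j^*(t)=\boldsymbol{Z}_j\bar{\boldsymbol{P}}_j(t)+(\boldsymbol{I}_m-\boldsymbol{Z}_j)\sum_{i\in\mathcal{N}}w_{ji}\boldsymbol{P}_i^*(t)$, substituting \eqref{eq:q}. Stacking over $j\in\mathcal{N}$ and using the fact that $\mathrm{diag}(\boldsymbol{Z}_j)_{j\in\mathcal{N}}=\boldsymbol{Z}$, $\boldsymbol{S}=\boldsymbol{I}_{mn}-\boldsymbol{Z}$, and the neighbor-averaging matrix factors as $\boldsymbol{W}\otimes\boldsymbol{I}_m$ (because each $\boldsymbol{Q}_j^*$ is a scalar-weighted combination of $m$-vectors), I obtain the compact identity
\begin{equation*}
    \boldsymbol{P}^*(t)=\boldsymbol{Z}\bar{\boldsymbol{P}}(t)+\boldsymbol{S}(\boldsymbol{W}\otimes\boldsymbol{I}_m)\boldsymbol{P}^*(t).
\end{equation*}

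Next I would solve this linear system by rearranging to $[\boldsymbol{I}_{mn}-\boldsymbol{S}(\boldsymbol{W}\otimes\boldsymbol{I}_m)]\boldsymbol{P}^*(t)=\boldsymbol{Z}\bar{\boldsymbol{P}}(t)$, and inverting the bracketed operator to get $\boldsymbol{P}^*(t)=\boldsymbol{U}\bar{\boldsymbol{P}}(t)$ with $\boldsymbol{U}$ exactly as in \eqref{eq:u}; reading off block rows yields \eqref{eq:decomposition}. The invertibility step needs justification: from \eqref{eq:opinion}, each $\boldsymbol{Z}_j=(\boldsymbol{I}_m+\alpha_j^{-1}\eta_j^{-1}\theta_j\boldsymbol{\varSigma}^{-1})^{-1}$ is symmetric positive definite with eigenvalues in $(0,1]$, so $\boldsymbol{S}$ is symmetric positive semi-definite with spectral norm strictly less than one (strictly, when $\theta_j>0$); combined with $\|\boldsymbol{W}\otimes\boldsymbol{I}_m\|\leqslant 1$ in an appropriate operator norm consistent with the row-stochasticity of $\boldsymbol{W}$ (e.g., the induced $\infty$-norm), the product has spectral radius strictly less than one, which guarantees invertibility via a Neumann series.

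Finally, to establish the row-sum property \eqref{eq:sumu}, I would right-multiply the defining identity $[\boldsymbol{I}_{mn}-\boldsymbol{S}(\boldsymbol{W}\otimes\boldsymbol{I}_m)]\boldsymbol{U}=\boldsymbol{Z}$ by the stacking matrix $\boldsymbol{1}_n\otimes\boldsymbol{I}_m\in\mathbf{R}^{mn\times m}$, whose $j$-th block row is $\boldsymbol{I}_m$, so that $\boldsymbol{U}(\boldsymbol{1}_n\otimes\boldsymbol{I}_m)$ has block rows $\sum_{i\in\mathcal{N}}\boldsymbol{U}_{ji}$. Row-stochasticity of $\boldsymbol{W}$ gives $(\boldsymbol{W}\otimes\boldsymbol{I}_m)(\boldsymbol{1}_n\otimes\boldsymbol{I}_m)=(\boldsymbol{W}\boldsymbol{1}_n)\otimes\boldsymbol{I}_m=\boldsymbol{1}_n\otimes\boldsymbol{I}_m$, so denoting $\boldsymbol{V}:=\boldsymbol{U}(\boldsymbol{1}_n\otimes\boldsymbol{I}_m)$ the equation becomes $\boldsymbol{V}-\boldsymbol{S}(\boldsymbol{1}_n\otimes\boldsymbol{I}_m)=(\boldsymbol{I}_{mn}-\boldsymbol{S})(\boldsymbol{1}_n\otimes\boldsymbol{I}_m)$, which collapses to $\boldsymbol{V}=\boldsymbol{1}_n\otimes\boldsymbol{I}_m$, i.e., $\sum_{i\in\mathcal{N}}\boldsymbol{U}_{ji}=\boldsymbol{I}_m$ for every $j$.

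The main obstacle is the invertibility argument for $\boldsymbol{I}_{mn}-\boldsymbol{S}(\boldsymbol{W}\otimes\boldsymbol{I}_m)$, since $\boldsymbol{S}$ and $\boldsymbol{W}\otimes\boldsymbol{I}_m$ do not in general commute and $\boldsymbol{W}$ is only row-stochastic (not symmetric), so a careful norm choice is needed; once that is in hand, the rest is a straightforward manipulation with the Kronecker identity $(\boldsymbol{A}\otimes\boldsymbol{B})(\boldsymbol{C}\otimes\boldsymbol{D})=(\boldsymbol{A}\boldsymbol{C})\otimes(\boldsymbol{B}\boldsymbol{D})$ and the defining property $\boldsymbol{W}\boldsymbol{1}_n=\boldsymbol{1}_n$.
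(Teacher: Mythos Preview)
Your approach is essentially the paper's: stack \eqref{eq:decomposition1} into $\boldsymbol{P}^*(t)=\boldsymbol{Z}\bar{\boldsymbol{P}}(t)+\boldsymbol{S}(\boldsymbol{W}\otimes\boldsymbol{I}_m)\boldsymbol{P}^*(t)$, invert to obtain \eqref{eq:u}, and use $\boldsymbol{W}\boldsymbol{1}_n=\boldsymbol{1}_n$ together with $(\boldsymbol{A}\otimes\boldsymbol{B})(\boldsymbol{C}\otimes\boldsymbol{D})=(\boldsymbol{A}\boldsymbol{C})\otimes(\boldsymbol{B}\boldsymbol{D})$ for the row-sum identity. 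You even supply an invertibility discussion that the paper simply omits.

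There is, however, a circularity in your row-sum step. After right-multiplying $[\boldsymbol{I}_{mn}-\boldsymbol{S}(\boldsymbol{W}\otimes\boldsymbol{I}_m)]\boldsymbol{U}=\boldsymbol{Z}$ by $\boldsymbol{1}_n\otimes\boldsymbol{I}_m$, the left-hand side is $\boldsymbol{V}-\boldsymbol{S}(\boldsymbol{W}\otimes\boldsymbol{I}_m)\boldsymbol{V}$, not $\boldsymbol{V}-\boldsymbol{S}(\boldsymbol{1}_n\otimes\boldsymbol{I}_m)$: the row-stochasticity identity you quote tells you what $(\boldsymbol{W}\otimes\boldsymbol{I}_m)$ does to $\boldsymbol{1}_n\otimes\boldsymbol{I}_m$, not to $\boldsymbol{V}$, so replacing $(\boldsymbol{W}\otimes\boldsymbol{I}_m)\boldsymbol{V}$ by $\boldsymbol{1}_n\otimes\boldsymbol{I}_m$ already assumes $\boldsymbol{V}=\boldsymbol{1}_n\otimes\boldsymbol{I}_m$. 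The clean fix---and this is exactly the paper's route---is to reverse the logic: verify directly that
\[
[\boldsymbol{I}_{mn}-\boldsymbol{S}(\boldsymbol{W}\otimes\boldsymbol{I}_m)](\boldsymbol{1}_n\otimes\boldsymbol{I}_m)=(\boldsymbol{I}_{mn}-\boldsymbol{S})(\boldsymbol{1}_n\otimes\boldsymbol{I}_m)=\boldsymbol{Z}(\boldsymbol{1}_n\otimes\boldsymbol{I}_m),
\]
and then invoke the invertibility of the bracket (which you have already argued) to conclude $\boldsymbol{U}(\boldsymbol{1}_n\otimes\boldsymbol{I}_m)=\boldsymbol{1}_n\otimes\boldsymbol{I}_m$, i.e., \eqref{eq:sumu}.
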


\begin{proof}
From \eqref{eq:decomposition1}, we have
\begin{equation}
    \boldsymbol{P}^*(t)=\boldsymbol{Z}\bar{\boldsymbol{P}}(t)+\boldsymbol{S}(\boldsymbol{W}\otimes\boldsymbol{I}_m)\boldsymbol{P}^*(t).
    \label{eq:ZS_concate}
\end{equation}
Therefore, using matrix operations, we can prove that
\begin{equation}
    \boldsymbol{P}^*(t)=\boldsymbol{U}\bar{\boldsymbol{P}}(t),
    \label{eq:pu}
\end{equation}
where $\boldsymbol{U}$ is in \eqref{eq:u}. When writing \eqref{eq:pu} in the component form, we have \eqref{eq:decomposition}. Denote $\boldsymbol{I}:=\boldsymbol{1}_n\otimes\boldsymbol{I}_m\in\mathbf{R}^{mn\times m}$ and $\boldsymbol{1}_n\in\mathbf{R}^n$ as the all-one vector, and we can prove that
\begin{align}
    &[\boldsymbol{I}_{mn}-\boldsymbol{S}(\boldsymbol{W}\otimes\boldsymbol{I}_m)]\boldsymbol{I}=\boldsymbol{I}-\boldsymbol{S}(\boldsymbol{W}\otimes\boldsymbol{I}_m)(\boldsymbol{1}_n\otimes\boldsymbol{I}_m)\notag\\
    =\ &\boldsymbol{I}-\boldsymbol{S}(\boldsymbol{W}\boldsymbol{1}_n)\otimes(\boldsymbol{I}_m\boldsymbol{I}_m)=\boldsymbol{I}-\boldsymbol{S}(\boldsymbol{1}_n\otimes\boldsymbol{I}_m)\notag\\
    =\ &(\boldsymbol{I}_{mn}-\boldsymbol{S})\boldsymbol{I}=\boldsymbol{Z}\boldsymbol{I},
    \label{eq:ZI}
\end{align}
where we use the property of the row-stochastic adjacency matrix $\boldsymbol{W}\boldsymbol{1}_n=\boldsymbol{1}_n$, and the property of the Kronecker product $(\boldsymbol{A}\otimes\boldsymbol{B})(\boldsymbol{C}\otimes\boldsymbol{D})=(\boldsymbol{A}\boldsymbol{C})\otimes(\boldsymbol{B}\boldsymbol{D})$ when the dimensions of $\boldsymbol{A}$, $\boldsymbol{B}$, $\boldsymbol{C}$, and $\boldsymbol{D}$ are compatible for multiplication. Therefore, from \eqref{eq:u} and \eqref{eq:ZI}, we have
\begin{align}
    \boldsymbol{U}\boldsymbol{I}=[\boldsymbol{I}_{mn}-\boldsymbol{S}(\boldsymbol{W}\otimes\boldsymbol{I}_m)]^{-1}\boldsymbol{Z}\boldsymbol{I}=\boldsymbol{I}.
    \label{eq:ui=i}
\end{align}
When writing \eqref{eq:ui=i} in the component form, we have \eqref{eq:sumu}.
\end{proof}

\subsection{The Proposed Fast Algorithm}
From Theorem \ref{the:1} and Theorem \ref{the:2}, we can obtain the analytical solutions for agents' optimal strategies. However, this method has high computational complexity. Especially, in \eqref{eq:integral}, we must compute the integral $\{\eta_j\}_{j\in\mathcal{N}}$ involving vectors. In Python, we can use \texttt{scipy.integrate.quad} to compute $\{\eta_j\}_{j\in\mathcal{N}}$ with a small error tolerance of $1.49\times10^{-8}$, which is very close to the exact value. Since \texttt{scipy.integrate.quad} uses adaptive quadrature techniques, it evaluates the integrand at numerous sample points and sums the results, which makes computing the integral with such high accuracy computationally expensive. To address this problem, we propose a fast algorithm to approximate $\{\eta_j\}_{j\in\mathcal{N}}$, as shown in Algorithm \ref{alg:1}.

In the fast algorithm, we utilize the right rectangle method in \cite{davis2007methods} and approximate $\eta_j$ with
\begin{equation}
     \hat{\eta}_j=\zeta_j\exp\left[-\alpha_jT\boldsymbol{\nu}^\top\boldsymbol{P}_j^*(T)+\frac{\alpha_j^2T}{2}\boldsymbol{P}_j^{*\top}(T)\boldsymbol{\varSigma}\boldsymbol{P}_j^*(T)\right],
     \label{eq:hatintegral}
\end{equation}
where $\zeta_j:=\exp(-\alpha_jx_j\mathrm{e}^{rT})$. In \eqref{eq:hatintegral}, we only need to perform simple matrix operations, which reduces the computational complexity. 
Compared to \texttt{scipy.integrate.quad}, the fast algorithm has a larger error when computing $\{\eta_j\}_{j\in\mathcal{N}}$, while the overall error when calculating agents' optimal strategies $\{\boldsymbol{P}_j^*(t)\}_{t\in\mathcal{T}}$ remains small, which will be validated in Section \ref{sec:experiment}.

\begin{algorithm}[t]\label{alg:1}
\small
	\caption{The Fast Algorithm.}
	\KwData{Interest rate $r$, excess return rate $\boldsymbol{\nu}$, volatility $\boldsymbol{\sigma}$, investment period $T$, adjacency matrix $\boldsymbol{W}$, risk aversion coefficients $\{\alpha_j\}_{j\in\mathcal{N}}$, influence coefficients $\{\theta_j\}_{j\in\mathcal{N}}$, initial wealth $\{x_j\}_{j\in\mathcal{N}}$, error tolerance $\varepsilon$.}
	\KwResult{Optimal strategies $\{\boldsymbol{P}_j^*(t)\}_{t\in\mathcal{T}},j\in\mathcal{N}$.}  
	\BlankLine
	$k=0,\delta\eta^{(0)}=+\infty,\eta_j^{(0)}=\exp(-\alpha_jx_j\mathrm{e}^{rT}-\frac{1}{2}\kappa T), j\in\mathcal{N}$;

	\While{$\delta\eta^{(k)}\geqslant\varepsilon$}{
  Calculate $\boldsymbol{Z}_j^{(k)},j\in\mathcal{N}$ using \eqref{eq:opinion};

  Calculate $\boldsymbol{U}_{ji}^{(k)},j,i\in\mathcal{N}$ using \eqref{eq:u};

  Calculate $\{\boldsymbol{P}_j^{(k)}(t)\}_{t\in\mathcal{T}},j\in\mathcal{N}$ using \eqref{eq:decomposition};
  
  Calculate $\eta_j^{(k+1)},j\in\mathcal{N}$ using \eqref{eq:hatintegral};
  
  $\delta\eta^{(k+1)}=\max_{j\in\mathcal{N}}|\eta_j^{(k+1)}-\eta_j^{(k)}|$;}
  
  $\boldsymbol{P}_j^*(t)\approx\boldsymbol{P}_j^{(k)}(t),t\in\mathcal{T},j\in\mathcal{N}$.
\end{algorithm}

\section{Influence of Mutual Influence on Agents' Optimal Strategies and Terminal Wealth}
\label{sec:analysis}
In this section, we theoretically analyze the impact of the mutual influence on agents' optimal strategies and their terminal wealth. To facilitate theoretical analysis, we consider the simple scenario where the influence of all $n$ agents on the $j$-th agent is homogeneous, i.e., the adjacency matrix $\boldsymbol{W}:=\frac{1}{n}\boldsymbol{1}_n\boldsymbol{1}_n^\top$. We will study the general cases with arbitrary $\boldsymbol{W}$ using numerical experiments in Section \ref{sec:experiment}. 

\subsection{The Optimal Strategies}
To analyze the impact of mutual influence on agents' optimal strategies, we first derive the asymptotic strategy as the influence coefficients $\{\theta_j\}_{j\in\mathcal{N}}$ approach infinity, and then show that agents' optimal strategies are linear combinations of their rational strategies and the asymptotic strategy.

\subsubsection{The Asymptotic Strategy}
First, we can show that, as the influence coefficients $\{\theta_j\}_{j\in\mathcal{N}}$ approach infinity, the optimal strategies of all $n$ agents converge to the same function, which we call the \textit{asymptotic strategy}. Thus, analyzing the impact of the mutual influence on the $j$-th agent's optimal strategy is equivalent to comparing his/her rational strategy $\{\bar{\boldsymbol{P}}_j(t)\}_{t\in\mathcal{T}}$ and the asymptotic strategy $\{\tilde{\boldsymbol{P}}(t)\}_{t\in\mathcal{T}}$.

\begin{theorem}\label{the:3}
As the influence coefficients $\{\theta_j\}_{j\in\mathcal{N}}$ approach infinity, all $n$ agents' optimal strategies converge to 
\begin{equation}
    \tilde{\boldsymbol{P}}(t)=\tilde{\alpha}^{-1}\mathrm{e}^{r(t-T)}\boldsymbol{\varSigma}^{-1}\boldsymbol{\nu},
    \label{eq:asymptotic}
\end{equation}
where $\tilde{\alpha}$ is the \textit{asymptotic risk aversion coefficient} with
\begin{equation}
    \tilde{\alpha}:=\frac{\sum_{j\in\mathcal{N}}\eta_j\alpha_j}{\sum_{j\in\mathcal{N}}\eta_j}.
    \label{eq:asy_alpha}
\end{equation}
\end{theorem}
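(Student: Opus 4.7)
The plan is to exploit the fact that the homogeneous adjacency matrix $\boldsymbol{W}=\frac{1}{n}\boldsymbol{1}_n\boldsymbol{1}_n^\top$ forces the average followees' strategy $\boldsymbol{Q}_j^*(t)$ to coincide for every $j$ with the overall mean $\bar{\boldsymbol{P}}^*(t):=\frac{1}{n}\sum_{i\in\mathcal{N}}\boldsymbol{P}_i^*(t)$. Substituting this identity into the decomposition of Theorem~\ref{the:1} produces the key relation
\begin{equation*}
\boldsymbol{P}_j^*(t)-\bar{\boldsymbol{P}}^*(t)=\boldsymbol{Z}_j\bigl[\bar{\boldsymbol{P}}_j(t)-\bar{\boldsymbol{P}}^*(t)\bigr],
\end{equation*}
and averaging over $j\in\mathcal{N}$ annihilates the left-hand side, which gives the closed form
\begin{equation*}
\bar{\boldsymbol{P}}^*(t)=\Bigl(\sum_{j\in\mathcal{N}}\boldsymbol{Z}_j\Bigr)^{-1}\sum_{j\in\mathcal{N}}\boldsymbol{Z}_j\bar{\boldsymbol{P}}_j(t).
\end{equation*}
The same relation shows that, as soon as $\boldsymbol{Z}_j\to\boldsymbol{O}_m$, every $\boldsymbol{P}_j^*(t)$ collapses onto $\bar{\boldsymbol{P}}^*(t)$, so it suffices to identify the common limit $\tilde{\boldsymbol{P}}(t)=\lim\bar{\boldsymbol{P}}^*(t)$.

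The next step is to sharpen the rate at which $\boldsymbol{Z}_j$ vanishes. Rewriting the definition \eqref{eq:opinion} as $\boldsymbol{Z}_j=(\boldsymbol{\varSigma}+\alpha_j^{-1}\eta_j^{-1}\theta_j\boldsymbol{I}_m)^{-1}\boldsymbol{\varSigma}$ and multiplying by $\theta_j$ gives $\theta_j\boldsymbol{Z}_j\to\alpha_j\eta_j\boldsymbol{\varSigma}$ as $\theta_j\to\infty$. Assuming all $\theta_j$ diverge at a common rate $\theta\to\infty$ (the scenario implicit in the theorem), I would multiply the averaged equation above by $\theta$ and pass to the limit. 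Using the simplification $\alpha_j\eta_j\boldsymbol{\varSigma}\bar{\boldsymbol{P}}_j(t)=\eta_j\mathrm{e}^{r(t-T)}\boldsymbol{\nu}$ that follows from \eqref{eq:rational}, the limit becomes
\begin{equation*}
\Bigl(\sum_{j\in\mathcal{N}}\alpha_j\eta_j\Bigr)\boldsymbol{\varSigma}\tilde{\boldsymbol{P}}(t)=\Bigl(\sum_{j\in\mathcal{N}}\eta_j\Bigr)\mathrm{e}^{r(t-T)}\boldsymbol{\nu},
\end{equation*}
and inverting $\boldsymbol{\varSigma}$ while reading off the scalar prefactor yields exactly $\tilde{\boldsymbol{P}}(t)=\tilde{\alpha}^{-1}\mathrm{e}^{r(t-T)}\boldsymbol{\varSigma}^{-1}\boldsymbol{\nu}$ with $\tilde{\alpha}$ as in \eqref{eq:asy_alpha}.

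The main technical obstacle is the self-consistency of the $\eta_j$'s: they are defined through the still-unknown $\boldsymbol{P}_j^*(t)$ via \eqref{eq:integral}, so \eqref{eq:asy_alpha} appears circular. To close the argument I would verify that the functional $\boldsymbol{P}_j^*(\cdot)\mapsto\eta_j$ is continuous in the uniform norm on $\mathcal{T}$, so that any accumulation point of the family $\{\boldsymbol{P}_j^*\}_{\theta}$ forces $\eta_j$ to converge to its value at that accumulation point. Plugging the ansatz $\tilde{\boldsymbol{P}}(t)\propto\mathrm{e}^{r(t-T)}\boldsymbol{\varSigma}^{-1}\boldsymbol{\nu}$ into \eqref{eq:integral} reduces the two integrals to $\tilde{\alpha}^{-1}\kappa T$ and $\tilde{\alpha}^{-2}\kappa T$ respectively, which turns \eqref{eq:asy_alpha} into a single scalar fixed-point equation for $\tilde{\alpha}$; showing this equation admits a unique positive solution, together with the uniform boundedness of $\{\boldsymbol{P}_j^*\}_{\theta}$ on $\mathcal{T}$, promotes the subsequential convergence to full convergence and completes the proof.
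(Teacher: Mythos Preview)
Your argument is correct and follows a genuinely different route from the paper. The paper first argues by contradiction that all strategies must coincide in the limit (otherwise some $D(\boldsymbol{P}_j^*,\boldsymbol{Q}_j^*)>0$ drives the objective to $-\infty$), and then identifies the common limit by applying L'H\^opital's rule directly to the block matrix $\boldsymbol{U}=[\boldsymbol{I}_{mn}-\boldsymbol{S}(\boldsymbol{W}\otimes\boldsymbol{I}_m)]^{-1}\boldsymbol{Z}$ with $\boldsymbol{W}=\frac{1}{n}\boldsymbol{1}_n\boldsymbol{1}_n^\top$. You instead exploit the homogeneity of $\boldsymbol{W}$ at the level of Theorem~\ref{the:1}, obtaining the closed form $\bar{\boldsymbol{P}}^*(t)=(\sum_j\boldsymbol{Z}_j)^{-1}\sum_j\boldsymbol{Z}_j\bar{\boldsymbol{P}}_j(t)$ and the deviation identity $\boldsymbol{P}_j^*-\bar{\boldsymbol{P}}^*=\boldsymbol{Z}_j(\bar{\boldsymbol{P}}_j-\bar{\boldsymbol{P}}^*)$, which makes both the collapse onto a common limit and the value of that limit transparent once $\theta\boldsymbol{Z}_j\to\alpha_j\eta_j\boldsymbol{\varSigma}$ is established. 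Your approach is more elementary and avoids inverting the $mn\times mn$ system; the paper's contradiction step, on the other hand, does not rely on the special form of $\boldsymbol{W}$ and would extend to general row-stochastic matrices. You are also right to flag two points the paper leaves implicit: the need for a common divergence rate of the $\theta_j$ (otherwise the limit of $(\sum_j\boldsymbol{Z}_j)^{-1}\sum_j\boldsymbol{Z}_j\bar{\boldsymbol{P}}_j$ depends on the ratios $\theta_j/\theta_k$ and need not equal $\tilde{\alpha}^{-1}\mathrm{e}^{r(t-T)}\boldsymbol{\varSigma}^{-1}\boldsymbol{\nu}$), and the self-consistency of the $\eta_j$ through \eqref{eq:integral}, which the paper's L'H\^opital computation treats as fixed constants. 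Your proposed resolution via continuity of $\boldsymbol{P}_j^*\mapsto\eta_j$ and a scalar fixed-point equation for $\tilde{\alpha}$ is the natural way to close this gap.
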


\begin{proof}
First, we prove that when $\{\theta_j\}_{j\in\mathcal{N}}$ approaches infinity, all $n$ agents' optimal strategies converge to the same function. We prove it by contradiction. Assume there exist $j',j''\in\mathcal{N}$ and $t\in\mathcal{T}$ such that $\boldsymbol{P}_{j'}^*(t)\ne\boldsymbol{P}_{j''}^*(t)$, and we can find at least one agent, e.g., the $j$-th agent, whose average deviation $D(\boldsymbol{P}_j^*,\boldsymbol{Q}_j^*)$ is strictly positive. When $\theta_j$ approaches infinity, the $j$-th agent's objective functional $\mathbb{E}\phi_j[X_j(T)]-\theta_jD(\boldsymbol{P}_j^*,\boldsymbol{Q}_j^*)$ approaches negative infinity, which does not reach its supremum. Therefore, all $n$ agents' optimal strategies must converge to the same function, and we define it as the asymptotic strategy $\{\tilde{\boldsymbol{P}}(t)\}_{t\in\mathcal{T}}$.

Next, we derive the expression of the asymptotic strategy. 
Substituting \eqref{eq:opinion} into \eqref{eq:u}, using the L'Hôpital's rule, we have
\begin{equation}
    \lim_{\{\theta_j\}_{j\in\mathcal{N}}\to\infty}\boldsymbol{U}=\mathrm{diag}\left(\frac{\eta_i\alpha_i}{\sum_{j\in\mathcal{N}}\eta_j\alpha_j}\right)_{i\in\mathcal{N}}\otimes\boldsymbol{I}_m.
    \label{eq:tildeu}
\end{equation}
Therefore, substituting \eqref{eq:tildeu} into \eqref{eq:pu}, we can prove that
\begin{equation}
    \tilde{\boldsymbol{P}}(t)=\frac{\sum_{j\in\mathcal{N}}\eta_j}{\sum_{j\in\mathcal{N}}\eta_j\alpha_j}\mathrm{e}^{r(t-T)}\boldsymbol{\varSigma}^{-1}\boldsymbol{\nu},
\end{equation}
which is \eqref{eq:asymptotic} given the definition of $\tilde{\alpha}$ in \eqref{eq:asy_alpha}.
\end{proof}

From Theorem \ref{the:3}, to compare the $j$-th agent's rational strategy $\{\bar{\boldsymbol{P}}_j(t)\}_{t\in\mathcal{T}}$ and the asymptotic strategy $\{\tilde{\boldsymbol{P}}(t)\}_{t\in\mathcal{T}}$, we only need to compare the $j$-th agent's risk aversion coefficient $\alpha_j$ and the asymptotic risk aversion coefficient $\tilde{\alpha}$. From \eqref{eq:asy_alpha}, the asymptotic risk aversion coefficient $\tilde{\alpha}$ represents the average level of risk aversion among all agents. We define the \textit{social average agent} as a representative agent whose behavior reflects the average characteristics of a group of agents, and whose risk aversion coefficient is $\tilde{\alpha}$ in \eqref{eq:asy_alpha}. Therefore, when the influence coefficients $\{\theta_j\}_{j\in\mathcal{N}}$ approaches infinity, all $n$ agents eventually converge to the social average agent, sharing the same level of risk aversion.

If $\alpha_j<\tilde{\alpha}$, the absolute value of $\{\tilde{\boldsymbol{P}}(t)\}_{t\in\mathcal{T}}$ is smaller than that of $\{\bar{\boldsymbol{P}}_j(t)\}_{t\in\mathcal{T}}$. It indicates that if the $j$-th agent is more risk-taking than the social average agent, he/she will become more risk-averse under mutual influence. Conversely, if $\alpha_j>\tilde{\alpha}$, the absolute value of $\{\tilde{\boldsymbol{P}}(t)\}_{t\in\mathcal{T}}$ is larger than that of $\{\bar{\boldsymbol{P}}_j(t)\}_{t\in\mathcal{T}}$. It indicates that if the $j$-th agent is more risk-averse than the social average agent, he/she will become more risk-taking under mutual influence. 


\subsubsection{The Relationship Among Optimal Strategy, Rational Strategy, and Asymptotic Strategy}
With Theorem \ref{the:3}, we can further prove that the $j$-th agent's optimal strategy $\{\boldsymbol{P}_j^*(t)\}_{t\in\mathcal{T}}$ is a linear combination of his/her rational strategy $\{\bar{\boldsymbol{P}}_j(t)\}_{t\in\mathcal{T}}$ and the asymptotic strategy $\{\tilde{\boldsymbol{P}}(t)\}_{t\in\mathcal{T}}$. 

\begin{theorem}\label{the:5}
The $j$-th agent's optimal strategy $\{\boldsymbol{P}_j^*(t)\}_{t\in\mathcal{T}}$ in the differential game problem \eqref{eq:game_problem} is 
\begin{equation}
    \boldsymbol{P}_j^*(t)=\boldsymbol{V}_j\bar{\boldsymbol{P}}_j(t)+(\boldsymbol{I}_m-\boldsymbol{V}_j)\tilde{\boldsymbol{P}}(t),
    \label{eq:decomposition2}
\end{equation}
where the weight matrix is
\begin{equation}
    \boldsymbol{V}_j:=(\alpha_j^{-1}-\tilde{\alpha}^{-1})^{-1}\left(\sum_{i\in\mathcal{N}}\alpha_i^{-1}\boldsymbol{U}_{ji}-\tilde{\alpha}^{-1}\boldsymbol{I}_m\right).
    \label{eq:v}
\end{equation}
\end{theorem}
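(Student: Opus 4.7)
The plan is to leverage the analytical decomposition from Theorem \ref{the:2} together with the explicit forms of the rational and asymptotic strategies from \eqref{eq:rational} and \eqref{eq:asymptotic}, and then reduce the claim to a scalar algebraic rearrangement. The key observation is that every rational strategy $\bar{\boldsymbol{P}}_i(t)$ and the asymptotic strategy $\tilde{\boldsymbol{P}}(t)$ are collinear: they are all scalar multiples of the common vector $\boldsymbol{\mu}(t):=\mathrm{e}^{r(t-T)}\boldsymbol{\varSigma}^{-1}\boldsymbol{\nu}$, with $\bar{\boldsymbol{P}}_i(t)=\alpha_i^{-1}\boldsymbol{\mu}(t)$ and $\tilde{\boldsymbol{P}}(t)=\tilde{\alpha}^{-1}\boldsymbol{\mu}(t)$. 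This lets me factor $\boldsymbol{\mu}(t)$ out of both sides of the desired identity \eqref{eq:decomposition2} and reduce the statement to matching $m\times m$ matrix coefficients.

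Concretely, I would first substitute \eqref{eq:rational} into the concatenated decomposition \eqref{eq:decomposition} from Theorem \ref{the:2} to obtain
\[
    \boldsymbol{P}_j^*(t)=\Bigl(\sum_{i\in\mathcal{N}}\alpha_i^{-1}\boldsymbol{U}_{ji}\Bigr)\boldsymbol{\mu}(t).
\]
Next, I would expand the target right-hand side $\boldsymbol{V}_j\bar{\boldsymbol{P}}_j(t)+(\boldsymbol{I}_m-\boldsymbol{V}_j)\tilde{\boldsymbol{P}}(t)$ using the same substitutions and collect terms, yielding $\bigl[(\alpha_j^{-1}-\tilde{\alpha}^{-1})\boldsymbol{V}_j+\tilde{\alpha}^{-1}\boldsymbol{I}_m\bigr]\boldsymbol{\mu}(t)$. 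Equating the matrix coefficients gives the linear equation $(\alpha_j^{-1}-\tilde{\alpha}^{-1})\boldsymbol{V}_j+\tilde{\alpha}^{-1}\boldsymbol{I}_m=\sum_{i\in\mathcal{N}}\alpha_i^{-1}\boldsymbol{U}_{ji}$, whose unique solution is exactly \eqref{eq:v}. Since the sum-of-rows property \eqref{eq:sumu} guarantees consistency at the two limiting regimes (recovering $\boldsymbol{V}_j=\boldsymbol{I}_m$ when $\theta_j\to 0$ and $\boldsymbol{V}_j\to\boldsymbol{O}_m$ as $\{\theta_j\}\to\infty$ via \eqref{eq:tildeu}), this also serves as a useful sanity check, though it is not needed for the proof itself.

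The only subtlety, and the main obstacle worth flagging, is the degenerate case $\alpha_j=\tilde{\alpha}$, in which the scalar $(\alpha_j^{-1}-\tilde{\alpha}^{-1})$ vanishes and the inverse in \eqref{eq:v} is ill-defined. In this case $\bar{\boldsymbol{P}}_j(t)=\tilde{\boldsymbol{P}}(t)$, so the decomposition \eqref{eq:decomposition2} is trivially satisfied for any choice of $\boldsymbol{V}_j$, and the formula \eqref{eq:v} should be interpreted in a limiting sense. Apart from this technicality, the proof reduces entirely to direct substitution and rearrangement, and no deeper argument is required once Theorems \ref{the:2} and \ref{the:3} are available.
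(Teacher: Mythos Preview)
Your proposal is correct and follows essentially the same approach as the paper, which simply states that combining \eqref{eq:rational}, \eqref{eq:decomposition}, and \eqref{eq:asymptotic} yields \eqref{eq:decomposition2} and \eqref{eq:v} and omits the details. Your explicit factoring through the common direction $\boldsymbol{\mu}(t)=\mathrm{e}^{r(t-T)}\boldsymbol{\varSigma}^{-1}\boldsymbol{\nu}$ and your remark on the degenerate case $\alpha_j=\tilde{\alpha}$ are welcome additions that the paper does not spell out.
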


\begin{proof}
Combining \eqref{eq:rational}, \eqref{eq:decomposition}, and \eqref{eq:asymptotic}, we can obtain \eqref{eq:decomposition2} and \eqref{eq:v}, and we omit the details here.
\end{proof}

Theorem \ref{the:5} shows the relationship among the $j$-th agent's optimal strategy $\{\boldsymbol{P}_j^*(t)\}_{t\in\mathcal{T}}$, rational strategy $\{\bar{\boldsymbol{P}}_j(t)\}_{t\in\mathcal{T}}$, and the asymptotic strategy $\{\tilde{\boldsymbol{P}}(t)\}_{t\in\mathcal{T}}$. When $\theta_j=0$, we have the weight matrix $\boldsymbol{V}_j=\boldsymbol{I}_m$, which indicates that $\{\boldsymbol{P}_j^*(t)\}_{t\in\mathcal{T}}=\{\bar{\boldsymbol{P}}_j(t)\}_{t\in\mathcal{T}}$, i.e., the $j$-th agent makes rational strategies without others influence. As $\theta_j$ approaches infinity, we have the weight matrix $\boldsymbol{V}_j=\boldsymbol{O}_m$, which indicates that $\{\boldsymbol{P}_j^*(t)\}_{t\in\mathcal{T}}=\{\tilde{\boldsymbol{P}}(t)\}_{t\in\mathcal{T}}$, i.e., the $j$-th agent's optimal strategy converges to the asymptotic strategy under mutual influence among agents. As $\theta_j$ increases from $0$ to infinity, the weight matrix $\boldsymbol{V}_j$ transitions from $\boldsymbol{I}_m$ to $\boldsymbol{O}_m$, signifying that as the mutual influence becomes strong, the $j$-th agent's optimal strategy gradually shifts from his/her rational strategy to the asymptotic strategy. However, due to the complex form of \eqref{eq:v}, it is challenging to derive the quantitative relationship between $\theta_j$ and $\{\boldsymbol{P}_j^*(t)\}_{t\in\mathcal{T}}$, and we will use numerical experiments to study the impact of $\theta_j$ on $\{\boldsymbol{P}_j^*(t)\}_{t\in\mathcal{T}}$ in Section \ref{sec:experiment}.

\subsection{The Terminal Wealth}
Next, we analyze the impact of the mutual influence on agents' terminal wealth. We compare agents' terminal wealth when the influence coefficients $\{\theta_j\}_{j\in\mathcal{N}}$ are zero to the case when $\{\theta_j\}_{j\in\mathcal{N}}$ approach infinity. We will study the general cases with arbitrary $\{\theta_j\}_{j\in\mathcal{N}}$ using numerical experiments in Section \ref{sec:experiment}. When the influence coefficients $\{\theta_j\}_{j\in\mathcal{N}}$ are equal to zero and agents make independent strategies without mutual influence, we call the $j$-th agent's terminal wealth $\bar{X}_j(T)$ the \textit{rational terminal wealth}. As the influence coefficients $\{\theta_j\}_{j\in\mathcal{N}}$ approaches infinity, we call the $j$-th agent's terminal wealth $\tilde{X}_j(T)$ the \textit{asymptotic terminal wealth}. In Theorem \ref{the:4}, we derive the means and variances of the rational terminal wealth $\bar{X}_j(T)$ and asymptotic terminal wealth $\tilde{X}_j(T)$, respectively.

\begin{theorem}\label{the:4}
The means and variances of the $j$-th agent's rational terminal wealth $\bar{X}_j(T)$ and asymptotic terminal wealth $\tilde{X}_j(T)$ are
\begin{align}
    \mathbb{E}\bar{X}_j(T)=x_j\mathrm{e}^{rT}+\alpha_j^{-1}\kappa T\quad \text{and}\quad \ \mathbb{D}\bar{X}_j(T)=\alpha_j^{-2}\kappa T,\label{eq:ed1}\\
    \mathbb{E}\tilde{X}_j(T)=x_j\mathrm{e}^{rT}+\tilde{\alpha}^{-1}\kappa T\quad \text{and}\quad \mathbb{D}\tilde{X}_j(T)=\tilde{\alpha}^{-2}\kappa T.\label{eq:ed2}
\end{align}
\end{theorem}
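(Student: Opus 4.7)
The plan is to integrate the wealth SDE \eqref{eq:wealth} explicitly for each of the two specific strategies, noting that the rational strategy $\bar{\boldsymbol{P}}_j(t)$ in \eqref{eq:rational} and the asymptotic strategy $\tilde{\boldsymbol{P}}(t)$ in \eqref{eq:asymptotic} share the same functional form $c \, \mathrm{e}^{r(t-T)}\boldsymbol{\varSigma}^{-1}\boldsymbol{\nu}$, differing only in the scalar coefficient $c\in\{\alpha_j^{-1},\tilde{\alpha}^{-1}\}$. So I would derive a single formula in $c$ and then specialize.

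First, I would use the integrating factor $\mathrm{e}^{-rt}$ to rewrite \eqref{eq:wealth} as $\dif[\mathrm{e}^{-rt}X_j(t)] = \mathrm{e}^{-rt}\boldsymbol{P}_j^\top(t)[\boldsymbol{\nu}\dif t + \boldsymbol{\sigma}\dif\boldsymbol{B}(t)]$, and integrate from $0$ to $T$ to obtain
\begin{equation*}
X_j(T) = x_j\mathrm{e}^{rT} + \int_0^T \mathrm{e}^{r(T-t)}\boldsymbol{P}_j^\top(t)\boldsymbol{\nu}\,\dif t + \int_0^T \mathrm{e}^{r(T-t)}\boldsymbol{P}_j^\top(t)\boldsymbol{\sigma}\,\dif\boldsymbol{B}(t).
\end{equation*}
Substituting $\boldsymbol{P}_j(t) = c\,\mathrm{e}^{r(t-T)}\boldsymbol{\varSigma}^{-1}\boldsymbol{\nu}$ causes the factors $\mathrm{e}^{r(T-t)}$ and $\mathrm{e}^{r(t-T)}$ to cancel, so the drift integrand reduces to the constant $c\,\boldsymbol{\nu}^\top\boldsymbol{\varSigma}^{-1}\boldsymbol{\nu} = c\kappa$ and the diffusion integrand reduces to the constant (row) vector $c\,\boldsymbol{\nu}^\top\boldsymbol{\varSigma}^{-1}\boldsymbol{\sigma}$.

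For the mean, since the Itô integral has zero expectation, I obtain $\mathbb{E}X_j(T) = x_j\mathrm{e}^{rT} + c\kappa T$. For the variance, I would invoke the Itô isometry together with the independence of the components of $\boldsymbol{B}(t)$, yielding
\begin{equation*}
\mathbb{D}X_j(T) = \int_0^T c^2\,\boldsymbol{\nu}^\top\boldsymbol{\varSigma}^{-1}\boldsymbol{\sigma}\boldsymbol{\sigma}^\top\boldsymbol{\varSigma}^{-1}\boldsymbol{\nu}\,\dif t = c^2\,\boldsymbol{\nu}^\top\boldsymbol{\varSigma}^{-1}\boldsymbol{\nu}\,T = c^2\kappa T,
\end{equation*}
using $\boldsymbol{\sigma}\boldsymbol{\sigma}^\top = \boldsymbol{\varSigma}$. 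Finally, I would specialize $c = \alpha_j^{-1}$ to obtain \eqref{eq:ed1} and $c = \tilde{\alpha}^{-1}$ to obtain \eqref{eq:ed2}.

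The proof is essentially routine once the SDE is solved, so there is no significant obstacle. The only mildly delicate point is the justification of the Itô isometry, which requires the integrand $\mathrm{e}^{r(T-t)}\boldsymbol{P}_j^\top(t)\boldsymbol{\sigma}$ to lie in $L^2(\mathcal{T})$; this is immediate here because after cancellation the integrand is a constant (deterministic) vector, so the stochastic integral is in fact a Gaussian random variable with the stated variance, which also tells us that $\bar{X}_j(T)$ and $\tilde{X}_j(T)$ are normally distributed.
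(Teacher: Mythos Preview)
Your proof is correct and follows essentially the same route as the paper: the paper quotes the general formulas $\mathbb{E}X_j(T)=x_j\mathrm{e}^{rT}+\int_0^T\mathrm{e}^{r(T-t)}\boldsymbol{P}_j^\top(t)\boldsymbol{\nu}\,\dif t$ and $\mathbb{D}X_j(T)=\int_0^T\mathrm{e}^{2r(T-t)}\boldsymbol{P}_j^\top(t)\boldsymbol{\varSigma}\boldsymbol{P}_j(t)\,\dif t$ from prior work and then substitutes \eqref{eq:rational} and \eqref{eq:asymptotic}, whereas you supply the short integrating-factor/It\^o-isometry derivation of those formulas explicitly before substituting. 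Your unified treatment via the scalar $c$ is a clean touch but not a materially different argument.
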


\begin{proof}
Following our prior work in \cite{wang2024herd}, the mean and variance of the $j$-th agent's terminal wealth are
\begin{align}
    \mathbb{E}X_j(T)&=x_j\mathrm{e}^{rT}+\int_0^T\mathrm{e}^{r(T-t)}\boldsymbol{P}_j^\top(t)\boldsymbol{\nu}\dif t\quad \text{and}\label{eq:ex}\\
    \mathbb{D}X_j(T)&=\int_0^T\mathrm{e}^{2r(T-t)}\boldsymbol{P}_j^\top(t)\boldsymbol{\varSigma}\boldsymbol{P}_j(t)\dif t,\label{eq:dx}
\end{align}
respectively. Substituting \eqref{eq:rational} and \eqref{eq:asymptotic} into \eqref{eq:dx}, we can obtain \eqref{eq:ed1} and \eqref{eq:ed2}.
\end{proof}

From Theorem \ref{the:4}, to compare the $j$-th agent's rational terminal wealth $\bar{X}_j(T)$ and asymptotic terminal wealth $\tilde{X}_j(T)$, we only need to compare his/her risk aversion coefficient $\alpha_j$ and the asymptotic risk aversion coefficient $\tilde{\alpha}$. 
If $\alpha_j<\tilde{\alpha}$, the mean and variance of the asymptotic terminal wealth are smaller than those of the rational terminal wealth, i.e., $\mathbb{E}\tilde{X}_j(t)<\mathbb{E}\bar{X}_j(t)$ and $\mathbb{D}\tilde{X}_j(t)<\mathbb{D}\bar{X}_j(t)$. It indicates that if the $j$-th agent is more risk-taking than the social average agent, he/she will seek lower returns and risk under mutual influence. 
Conversely, if $\alpha_j>\tilde{\alpha}$, then $\mathbb{E}\tilde{X}_j(t)>\mathbb{E}\bar{X}_j(t)$ and $\mathbb{D}\tilde{X}_j(t)>\mathbb{D}\bar{X}_j(t)$. It indicates that if the $j$-th agent is more risk-averse than the social average agent, he/she will seek higher returns and risk under mutual influence. 

\section{Numerical Experiments}
\label{sec:experiment}
In this section, we conduct numerical experiments to validate the performance of the fast algorithm and to verify the correctness of our analysis of the impact of mutual influence on agents' optimal strategies and terminal wealth.

\subsection{The Accuracy and Efficiency of the Fast Algorithm}
\label{sec:experiment-A}
\subsubsection{Experiment setup}
We set the investment period as $T=50$. We use the 1-year fixed deposit interest rate of the Bank of China $r=1.45\%$ in the year 2023. We collect daily closing prices of $m=5$ stocks from 2019 to 2023 to determine the excess return rates $\boldsymbol{\nu}$ and covariance matrix $\boldsymbol{\varSigma}$ for the risky assets. Details are in \cite{sf}. We consider scenarios in the network with $n=10$, $50$, $100$, and $500$ agents. For each scenario, we randomly generate the adjacency matrix $\boldsymbol{W}$, and set the risk aversion coefficients $\{\alpha_j\}_{j\in\mathcal{N}}$, influence coefficients $\{\theta_j\}_{j\in\mathcal{N}}$, and initial wealth $\{x_j\}_{j\in\mathcal{N}}$. Following the works in \cite{yuen2001estimation, wang2024herd}, we assume that $\alpha_j\in[0.1,0.5]$, $\theta_j\in[10^{-5},10^{-4}]$, and $x_j\in[1,5]$ for all $j\in\mathcal{N}$. We observe the same trend for other values of the parameters.

\subsubsection{Experimental results}
We compare the accuracy and efficiency of the {\textbf{Fast}} algorithm in Algorithm \ref{alg:1}, with the {\textbf{Base}} algorithm that calculates the exact solutions of agents' optimal strategies using \texttt{scipy.integrate.quad} from Theorem \ref{the:1} and Theorem \ref{the:2}. We use the Relative Error of agents' optimal strategies to measure the algorithm accuracy, which is defined as
\begin{equation}
    \text{Relative Error}:=\frac{1}{n}\sum_{j\in\mathcal{N}}\frac{\Vert\boldsymbol{P}_j^*(t)-\hat{\boldsymbol{P}}_j^*(t)\Vert_2^2}{\Vert\boldsymbol{P}_j^*(t)\Vert_2^2},
\end{equation}
where $\Vert\boldsymbol{f}(t)\Vert_2^2:=\int_0^T(\boldsymbol{f}^\top\boldsymbol{f})(t)\dif t$, 
$\boldsymbol{P}_j^*(t)$ represents the $j$-th agent's exact optimal strategy calculated using {\textbf{Base}}, and $\hat{\boldsymbol{P}}_j^*(t)$ represents the approximated optimal strategies calculated using {\textbf{Fast}}. To validate the efficiency of the fast algorithm, we compare the computation times. 

The experimental results are in Table \ref{tab:2}. From Table \ref{tab:2}, the Relative Error is less than $10^{-31}$, and {\textbf{Fast}} reduces the computation times by $98\%$ compared to {\textbf{Base}}, which validate the accuracy and efficiency of our proposed fast algorithm.


\begin{table}[!t]
\setlength{\tabcolsep}{16pt}
\centering
\caption{Comparison of the relative errors and computation times between \textbf{Fast} and \textbf{Base}}
\label{tab:2}
\begin{tabular}{c|c|c|c}
\hline
\multirow{2}{*}{$n$} & \multirow{2}{*}{Relative Error} & \multicolumn{2}{c}{Computation Time} \\
\cline{3-4}
& & {\textbf{Base}} & {\textbf{Fast}} \\
\hline
$10$ & $2.76\times10^{-32}$ & $1.84$s & $0.04$s \\
$50$ & $1.48\times10^{-32}$ & $33.91$s & $0.67$s \\
$100$ & $1.94\times10^{-32}$ & $141.09$s & $2.39$s \\
$500$ & $2.34\times10^{-32}$ & $2641.51$s 
& $49.81$s \\
\hline
\end{tabular}
\vspace{-3mm}
\end{table}

\subsection{The Impact of Mutual Influence on Agents' Optimal Strategies and Terminal Wealth}
\label{sec:experiment-B}

\begin{figure*}[!t]
\centering
\subfloat[Asset 1 ($\boldsymbol{W}'$)]{\includegraphics[width=0.25\linewidth]{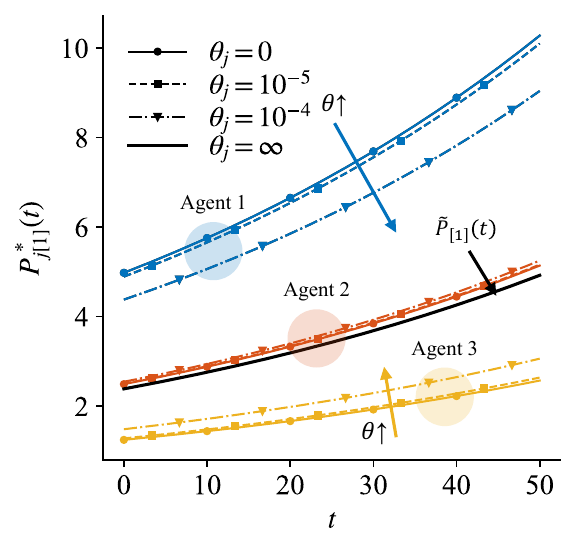}\label{fig:fig1a}}
\subfloat[Asset 2 ($\boldsymbol{W}'$)]{\includegraphics[width=0.25\linewidth]{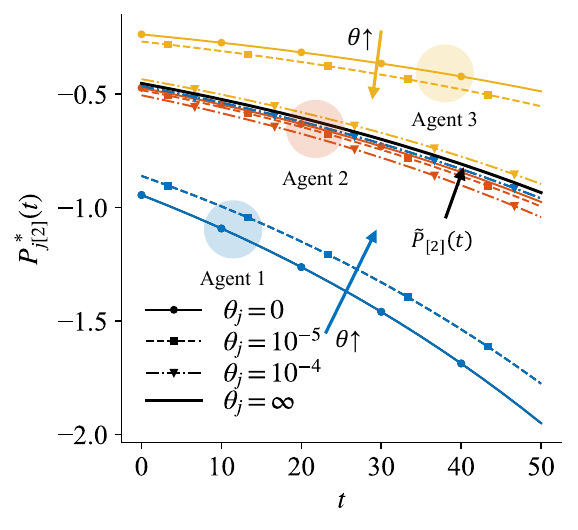}\label{fig:fig1b}}
\subfloat[Asset 1 ($\boldsymbol{W}''$)]{\includegraphics[width=0.25\linewidth]{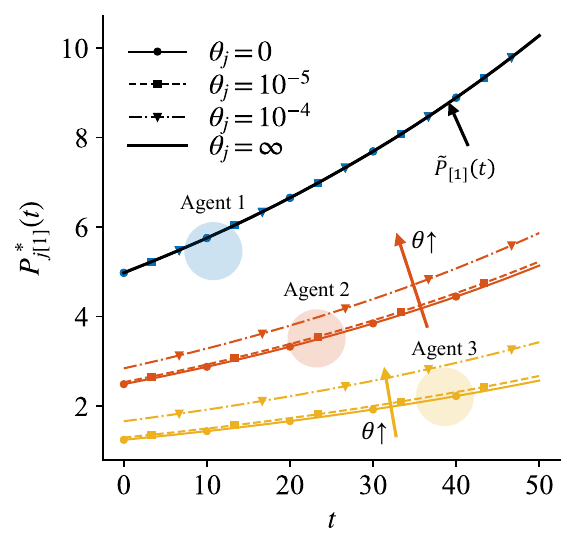}\label{fig:fig1c}}
\subfloat[Asset 2 ($\boldsymbol{W}''$)]{\includegraphics[width=0.25\linewidth]{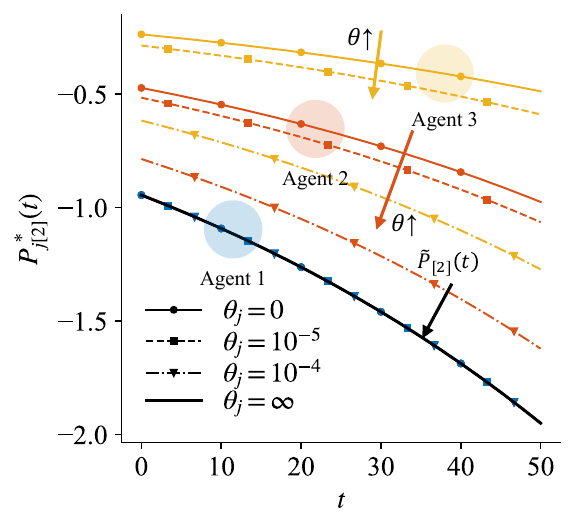}\label{fig:fig1d}}
\hfill
\subfloat[Mean ($\boldsymbol{W}'$)]{\includegraphics[width=0.25\linewidth]{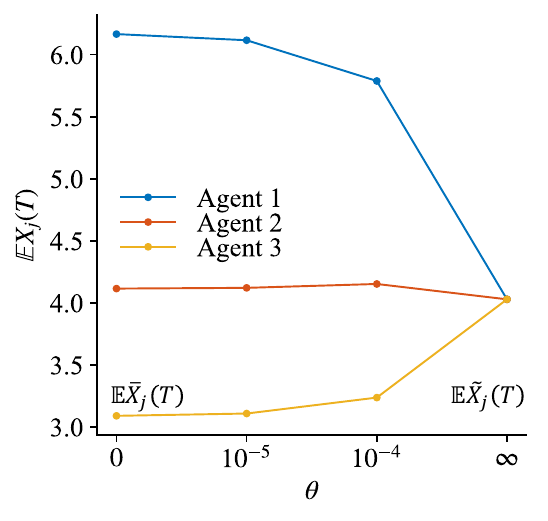}\label{fig:fig2a}}
\subfloat[Variance ($\boldsymbol{W}'$)]{\includegraphics[width=0.25\linewidth]{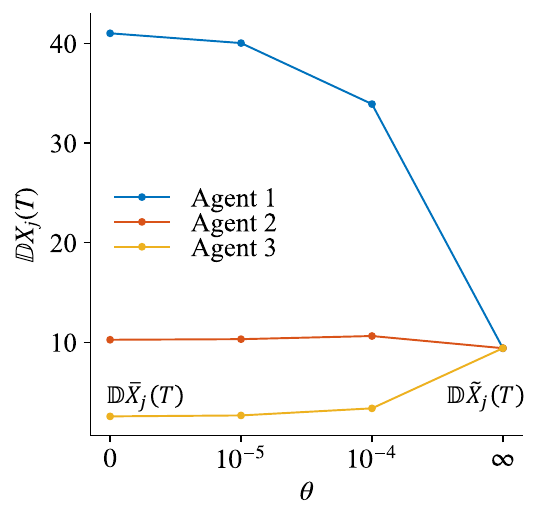}\label{fig:fig2b}}
\subfloat[Mean ($\boldsymbol{W}''$)]{\includegraphics[width=0.25\linewidth]{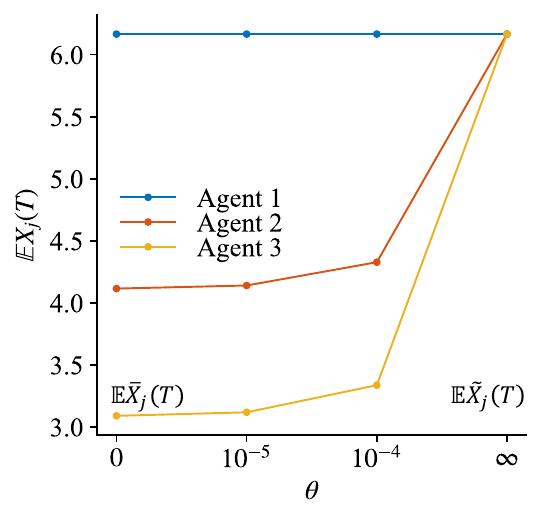}\label{fig:fig2c}}
\subfloat[Variance ($\boldsymbol{W}''$)]{\includegraphics[width=0.25\linewidth]{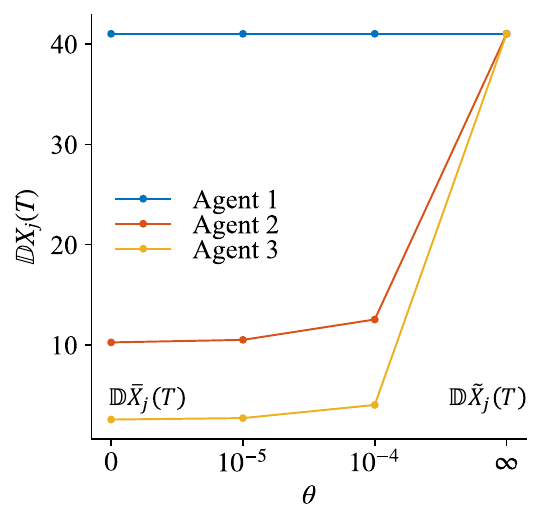}\label{fig:fig2d}}
\caption{The impact of mutual influence on agents' optimal strategies and terminal wealth.}\label{fig:fig1}
\vspace{-3mm}
\end{figure*}

\subsubsection{Experiment setup}
To facilitate the presentation of the results, we choose $n=3$ agents: Agent 1, 2, and 3, and $m=2$ stocks: Asset 1 and 2. Details are in \cite{sf}. We set the agents' risk aversion coefficients as $\alpha_1=0.1$, $\alpha_2=0.2$, and $\alpha_3=0.4$, and their initial wealth as $x_1=x_2=x_3=1$. We assume their influence coefficients are homogeneous, with values of $\theta=0$, $10^{-5}$, $10^{-4}$, and approaches infinity. We choose the adjacency matrices $\boldsymbol{W}'=\left[\begin{smallmatrix}
0 & 0.5 & 0.5 \\
0.5 & 0 & 0.5 \\
0.5 & 0.5 & 0
\end{smallmatrix}\right]$, which represent the cases where each agent is equally influenced by others, and $\boldsymbol{W}''=\left[\begin{smallmatrix}
1 & 0 & 0 \\
1 & 0 & 0 \\
1 & 0 & 0
\end{smallmatrix}\right]$, which represent the cases where each agent is only influenced by Agent 1, i.e., the leading expert \cite{wang2024herd}, when making strategies, as examples. By comparing the results with $\boldsymbol{W}'$ and $\boldsymbol{W}''$, we can analyze the differences in the impact of mutual influence and unilateral influence on agents' optimal strategies and terminal wealth. The other parameter settings are the same as Section \ref{sec:experiment-A}. We observe the same trend for other values of the parameters.

\subsubsection{Experimental results}
Next, we explain the results of agents' optimal strategies and terminal wealth, respectively.

\textbf{Optimal strategies}: We calculate agents' optimal strategies using Theorem \ref{the:1} and Theorem \ref{the:2}, and their rational strategies and asymptotic strategies using \eqref{eq:rational} and \eqref{eq:asymptotic}, respectively. The results are in Fig. \ref{fig:fig1a}--\ref{fig:fig1d}, where Fig. \ref{fig:fig1a}--\ref{fig:fig1b} plot agents' optimal strategies on Asset 1 and Asset 2 with adjacency matrix $\boldsymbol{W}'$, and Fig. \ref{fig:fig1c}--\ref{fig:fig1d} plot those with $\boldsymbol{W}''$, respectively. 

When the adjacency matrix is $\boldsymbol{W}'$, from Fig. \ref{fig:fig1a}--\ref{fig:fig1b}, as the influence coefficients increase, all three agents' optimal strategies converge to the asymptotic strategy, i.e., all three agents converge to the social average agent under mutual influence. We calculate the asymptotic risk aversion coefficient $\tilde{\alpha}'\approx0.21$ using \eqref{eq:asy_alpha}, which is larger than $\alpha_1$ and $\alpha_2$, and smaller than $\alpha_3$. From Fig. \ref{fig:fig1a}--\ref{fig:fig1b}, we can find that the absolute value of the asymptotic strategy $\{\tilde{\boldsymbol{P}}(t)\}_{t\in\mathcal{T}}$ is smaller than the rational strategies $\{\bar{\boldsymbol{P}}_1(t)\}_{t\in\mathcal{T}}$ and $\{\bar{\boldsymbol{P}}_2(t)\}_{t\in\mathcal{T}}$, and larger than $\{\bar{\boldsymbol{P}}_3(t)\}_{t\in\mathcal{T}}$. It validates Theorem \ref{the:3} that under mutual influence, risk-taking agents become more risk-averse, and vice versa. 
When the adjacency matrix is $\boldsymbol{W}''$, from Fig. \ref{fig:fig1c}--\ref{fig:fig1d}, as the influence coefficients increase, all three agents' optimal strategies converge to Agent 1's rational strategy, i.e., all of them make the same strategy as the leading expert under unilateral influence. 

\textbf{Terminal wealth}: We calculate the mean and variance of agents' rational and asymptotic terminal wealth, and their terminal wealth when $\theta=10^{-5}$ and $10^{-4}$ using \eqref{eq:wealth}. The results are in Fig \ref{fig:fig2a}--\ref{fig:fig2d}, where Fig. \ref{fig:fig2a}--\ref{fig:fig2b} plot the mean and variance of agents' terminal wealth with adjacency matrix $\boldsymbol{W}'$, and Fig. \ref{fig:fig2c}--\ref{fig:fig2d} with adjacency matrix $\boldsymbol{W}''$, respectively. 

When the adjacency matrix is $\boldsymbol{W}'$, from Fig. \ref{fig:fig2a}--\ref{fig:fig2b}, as the influence coefficients increase,  all three agents' terminal wealth converge to the asymptotic terminal wealth, i.e., all three agents converge to the social average agent under mutual influence. Because $\tilde{\alpha}'$ is larger than $\alpha_1$ and $\alpha_2$, and smaller than $\alpha_3$, the means and variances of Agent 1's and Agent 2's asymptotic terminal wealth are smaller than those of their rational terminal wealth, while the mean and variance of Agent 3's asymptotic terminal wealth are larger than those of his/her rational terminal wealth. It validates Theorem \ref{the:4} that under mutual influence, risk-taking agents seek lower returns and risk, and vice versa. 
When the adjacency matrix is $\boldsymbol{W}''$, from Fig. \ref{fig:fig2c}--\ref{fig:fig2d}, all three agents' terminal wealth converge to Agent 1's terminal wealth, i.e., all three agents obtain the same terminal wealth as that of the leading expert under unilateral influence. 

\section{Conclusion}
\label{sec:conclusion}
In this work, we study the optimal investment differential game problem with mutual influence in social networks and derive the analytical solutions for agents' optimal strategies. We propose a fast algorithm to find agents' approximated optimal strategies with low computational complexity. Numerical experiments show that the fast algorithm can reduce the computation time by $98\%$ while the relative error is no larger than $10^{-31}$. We theoretically analyze the impact of mutual influence on agents' optimal strategies and terminal wealth. When the mutual influence approaches infinity with homogeneous influence across all agents, we show that the agents' optimal strategies converge to the asymptotic strategy, which is the social average agent's rational strategy. Our analysis and numerical experiments show that if an agent's risk aversion coefficient is smaller than the asymptotic risk aversion coefficient, he/she will become more risk-averse and seek lower returns and risk under the mutual influence when compared to the social average agent, and vice versa.

\bibliographystyle{IEEEbib}
\bibliography{refs}

\begin{thebibliography}{10}

\bibitem{merton1969lifetime}
Robert~C Merton,
\newblock ``Lifetime portfolio selection under uncertainty: The continuous-time case,''
\newblock {\em The Review of Economics and Statistics}, pp. 247--257, 1969.

\bibitem{rogers2013optimal}
Leonard~CG Rogers,
\newblock {\em Optimal investment}, vol. 1007,
\newblock Springer, 2013.

\bibitem{leung2021optimal}
Tim Leung and Yang Zhou,
\newblock ``Optimal dynamic futures portfolios under a multiscale central tendency ornstein-uhlenbeck model,''
\newblock in {\em 2021 American Control Conference (ACC)}. IEEE, 2021, pp. 1168--1173.

\bibitem{moehle2021certainty}
Nicholas Moehle and Stephen Boyd,
\newblock ``A certainty equivalent merton problem,''
\newblock {\em IEEE Control Systems Letters}, vol. 6, pp. 1478--1483, 2021.

\bibitem{hao2019online}
Qian Hao, Dayong Dong, and Keke Wu,
\newblock ``Online investment forum and the market response around earnings announcement in the {Chinese} stock markets,''
\newblock {\em International Journal of Accounting \& Information Management}, vol. 27, no. 4, pp. 615--631, 2019.

\bibitem{delellis2017evolution}
Pietro DeLellis, Anna DiMeglio, Franco Garofalo, and Francesco~Lo Iudice,
\newblock ``Evolution of networks of financial agents driven by herding phenomena,''
\newblock in {\em 2017 American Control Conference (ACC)}. IEEE, 2017, pp. 1598--1603.

\bibitem{ahmad2022does}
Maqsood Ahmad and Qiang Wu,
\newblock ``Does herding behavior matter in investment management and perceived market efficiency? {Evidence} from an emerging market,''
\newblock {\em Management Decision}, vol. 60, no. 8, pp. 2148--2173, 2022.

\bibitem{chang2000examination}
Eric~C Chang, Joseph~W Cheng, and Ajay Khorana,
\newblock ``An examination of herd behavior in equity markets: An international perspective,''
\newblock {\em Journal of Banking \& Finance}, vol. 24, no. 10, pp. 1651--1679, 2000.

\bibitem{yu2019detection}
Kenan Yu, Yaxuan Li, and Zuode Cao,
\newblock ``Detection and estimation of herd behavior in futures market,''
\newblock in {\em 2019 International Conference on Economic Management and Model Engineering (ICEMME)}. IEEE, 2019, pp. 213--223.

\bibitem{wang2024herd}
Huisheng Wang and H.~Vicky Zhao,
\newblock ``Optimal investment with herd behaviour using rational decision decomposition,''
\newblock in {\em 2024 Chinese Control Conference (CCC)}. IEEE, 2024, pp. 1645--1651.

\bibitem{friedman2013differential}
Avner Friedman,
\newblock {\em Differential games},
\newblock Courier Corporation, 2013.

\bibitem{suzuki2014game}
Toshihiro Suzuki,
\newblock ``A game theoretic approach to multi-agent cooperation with application to economic systems,'' 2014.

\bibitem{zhang2021multi}
Kaiqing Zhang, Zhuoran Yang, and Tamer Ba{\c{s}}ar,
\newblock ``Multi-agent reinforcement learning: A selective overview of theories and algorithms,''
\newblock {\em Handbook of reinforcement learning and control}, pp. 321--384, 2021.

\bibitem{zhang2024optimizing}
Hengxi Zhang, Zhendong Shi, Yuanquan Hu, Wenbo Ding, Ercan~E Kuruo{\u{g}}lu, and Xiao-Ping Zhang,
\newblock ``Optimizing trading strategies in quantitative markets using multi-agent reinforcement learning,''
\newblock in {\em ICASSP 2024-2024 IEEE International Conference on Acoustics, Speech and Signal Processing (ICASSP)}. IEEE, 2024, pp. 136--140.

\bibitem{pratt1978risk}
John~W Pratt,
\newblock ``Risk aversion in the small and in the large,''
\newblock in {\em Uncertainty in economics}, pp. 59--79. Elsevier, 1978.

\bibitem{davis2007methods}
Philip~J Davis and Philip Rabinowitz,
\newblock {\em Methods of numerical integration},
\newblock Courier Corporation, 2007.

\bibitem{sf}
``{The Supplementary File},'' \url{https://drive.google.com/file/d/1uqUEEPGHsBahrNU_66wrHgSeNn6wLeHY/view?usp=drive_link}.

\bibitem{yuen2001estimation}
KC~Yuen, H~Yang, and KL~Chu,
\newblock ``Estimation in the constant elasticity of variance model,''
\newblock {\em British Actuarial Journal}, vol. 7, no. 2, pp. 275--292, 2001.

\end{thebibliography}

\end{document}